\documentclass[letterpaper, 10 pt, conference]{ieeeconf}
\IEEEoverridecommandlockouts
\usepackage{graphicx} 
\usepackage{cite}

\usepackage{amsmath}
\usepackage{amssymb}
\usepackage{hyperref}

\let\labelindent\relax
\usepackage{enumitem}

\newtheorem{theorem}{Theorem}
\newtheorem{proposition}{Proposition}
\newtheorem{lemma}{Lemma}

\newtheorem{definition}{Definition}
\newtheorem{remark}{Remark}

\newcommand{\R}{\mathbb{R}}
\newcommand{\N}{\mathbb{N}}

\newcommand{\sat}{\textbf{sat}}

\newcommand{\norm}[1]{\lVert #1 \rVert}

\newcommand{\ones}{\mathbf{1}}

\newcommand{\Ind}{\mathbb{I}}
\newcommand{\idx}{\mathcal{I}}
\newcommand{\bzeta}{\boldsymbol{\zeta}}
\newcommand{\bz}{\boldsymbol{z}}
\newcommand{\btau}{\boldsymbol{\tau}}
\newcommand{\bb}{\boldsymbol{b}}
\newcommand{\bmu}{\boldsymbol{\mu}}

\title{\LARGE \bf
A Continuous Time and State-Space Relaxation of the Linear Threshold Model with Nonlinear Opinion Dynamics
}
\author{Ian Xul Belaustegui$^{a}$, 
        Himani Sinhmar$^{a}$, 
        Ling-Wei Kong$^{b}$,
        Andrew Michael Hein$^{b}$,
        Naomi Ehrich Leonard$^{a}$
    \thanks{}
    \thanks{This research has been supported in part by
    the Air Force Office of Scientific Research grant FA9550-24-1-0002 
     and the William R.\ and Jane G.\ Schowalter Research Fund, and National Science Foundation grants EF-2222478 and IOS-2338596.}
    \thanks{$^{a}$Department of Mechanical and Aerospace Engineering, Princeton University; \texttt{\{ianxul,himani.sinhmar,naomi\}@princeton.edu}}%
    \thanks{$^{b}$Department of Computational Biology, Cornell University; \texttt{\{ling-wei.kong, amh433\}@cornell.edu}}%
}

\begin{document}

\maketitle

\begin{abstract}
    The Linear Threshold Model (LTM) is widely used to study the propagation of collective behaviors as complex contagions. However, its dependence on discrete states and timesteps restricts its ability to capture the multiple time-scales inherent in decision-making, as well as the effects of subthreshold signaling. To address these limitations, we introduce a continuous-time and state-space relaxation of the LTM based on the Nonlinear Opinion Dynamics (NOD) framework. By replacing the discontinuous step-function thresholds of the LTM with the smooth bifurcations of the NOD model, we map discrete cascade processes to the continuous flow of a dynamical system. We prove that, under appropriate parameter choices, activation in the discrete LTM guarantees activation in the continuous NOD relaxation for any given seed set. We establish explicit conditions for equivalence: by sufficiently bounding the social coupling parameter, the continuous NOD cascades exactly recover the cascades of the discrete LTM. We then illustrate how this NOD relaxation provides a richer analytical framework than the LTM, allowing for the exploration of cascades driven by strictly subthreshold inputs and the role of temporally distributed signals.
\end{abstract}

\section{Introduction}

The propagation of collective behaviors, from the adoption of social norms \cite{granovetter1978threshold-1e6, watts2002simple-6dc,centola2007complex-055} to coordinated escape responses in animal groups \cite{rosenthal2015revealing-d20,fahimipour2023wild} is frequently formalized as a complex contagion \cite{centola2007complex-055} and analyzed using the Linear Threshold Model (LTM) \cite{kempe2003maximizing-eb1, rossi2017threshold-568, acemoglu2011diffusion-032}. Although they are foundational, standard discrete-state threshold models possess fundamental limitations when applied to real-world biological and engineered systems. These models can explain many aspects of behavioral spread, but they cannot be used to explore questions of how cascades emerge in the first place. Nodes in a fully inactive system cannot transition to an active state, which excludes the possibility of spontaneous false alarms that are often observed in nature. Furthermore, their reliance on discrete timesteps abstracts away the intrinsically continuous timescales of individual decision-making processes \cite{hein2022ecological-43f}, preventing the study of how internal cognitive dynamics interact with external stimuli over time.

To address these limitations, we introduce a continuous-time and state-space framework based on Nonlinear Opinion Dynamics (NOD) \cite{bizyaeva2023nonlinear-758, leonard2023fast-129}. By considering the cascade process through the lens of nonlinear dynamical systems, we provide a mathematically rigorous bridge between discrete contagion models and continuous networked dynamics. We then show how the NOD framework, unlike the LTM, provides the means to explore emergence of cascades from subthreshold inputs and the role of temporally distributed inputs. 

Previous approaches have introduced continuous formulations of the LTM across different disciplines. Notably, the Generalized Linear Threshold model (GLT) introduces continuous stochastic waiting times for activation, yet the state space itself remains strictly binary \cite{ran2020generalized}. While allowing for richer temporal structures of cascades, this approach does not capture the continuous, dynamical evolution of agent opinions. Most closely related to our approach, the Continuous Threshold Model (CTM) replaces the discrete step function of the LTM with a smooth sigmoidal activation function, yielding a smooth flow \cite{zhong2019continuous-04c}. In the CTM, a positive sign of the continuous state corresponds to activation, and the exact LTM equivalence is recovered only in the limit as the slope of the sigmoid approaches infinity, forcing the continuous function back into a strict step-function. This is fundamentally different from our approach, where activation corresponds to an agent's opinion converging to one, and the equivalence of cascades with LTM is exactly recovered for an explicit range of parameter values. 

Given an instance of the LTM network dynamics, which are discrete-time and discrete-state, we use the term \textit{relaxation} to mean producing a continuous (almost everywhere differentiable) dynamical system that inherits the fundamental properties of the LTM---the notion of activation, the interaction network, and the activation thresholds---in such a way that similarity or equivalence of cascade behavior is a consequence of structural embedding. The NOD relaxation we propose does exactly this: activation becomes a convergence condition to an active stable equilibrium, the interaction matrix is inherited unchanged, and the step-function thresholds of the LTM become saddle-node bifurcations.

Our main contributions are as follows:

\begin{enumerate}[label=\roman*)]
    \item We establish a formal procedure for producing a Nonlinear Opinion Dynamics (NOD) relaxation of any given Linear Threshold Model (LTM). By replacing the discontinuous step-function thresholds of the LTM with the smooth bifurcations of the NOD model, we map the discrete cascade process to the continuous flow of a dynamical system.
    \item We prove that, under appropriate parameter choices, activation in the discrete LTM guarantees activation in the continuous NOD relaxation for any given seed set.
    \item We prove exact equivalence between the two models for an explicit range of parameters: by sufficiently bounding the social coupling parameter, the continuous NOD dynamics exactly recover the discrete cascade of the LTM for any given seed set.
    \item We show how the NOD relaxation expands on the LTM and, unlike the LTM, can be used to explore important facets of cascade processes: 1) the possibility of cascades from only subthreshold inputs and 2) the role of temporal separation of distributed inputs.
\end{enumerate}

Section~\ref{sec:notation} defines notation and Section~\ref{sec:LTM&FCM} provides a review of the LTM. In Section~\ref{sec:NOD} we introduce the NOD framework for continuous cascades and prove it is a relaxation of the LTM in Section~\ref{sec:RelaxLTM2NOD}. We illustrate its more expansive features in Section~\ref{sec:expand}. Section~\ref{sec:conclusions} provides conclusions.

\section{Notation}
\label{sec:notation}
\textbf{Important sets:} $\N$ is the natural numbers, $\R$ is the real numbers, and $\R_>:=(0,\infty)$, $\R_\geq:=[0,\infty)$. If $n\in\N\setminus\{0\}$, then $[n]:=\{1,\dots, n\}$. 
\textbf{Important vectors:} The all ones vector is $\ones:=\begin{pmatrix}1 &\dots &  1\end{pmatrix}^T$, with its dimension determined by the context.  
\textbf{Important functions and operations:} $\vee$ is the logical OR operator. $\Ind$ is the indicator function. $\odot$ is the Hadamard (entry-wise) product. $\sat:\R^n\to [0,1]^n$ is the piece-wise linear saturation function applied entry-wise, defined as $\sat(\boldsymbol{x})=\max(0, \min(1, \boldsymbol{x}))$. For a vector $\boldsymbol{x}\in\R^n$, $\norm{\boldsymbol{x}}_\infty:=\max\{\lvert x_i\rvert\mid i\in[n]\}$. For a matrix $A\in\R_\geq^{n\times n}$, $\norm{A}_\infty:=\norm{A\ones}_\infty$.

\section{The Linear Threshold Model}\label{sec:LTM&FCM}
We consider the deterministic (persistent) LTM as a discrete-time dynamical system defined in the following way \cite{kempe2003maximizing-eb1, watts2002simple-6dc}. Consider a directed graph $G=(V, E)$ with $n\in\N$ nodes, and non-negative edge weights given by a weighted adjacency matrix $A\in\R_\geq^{n\times n}$, with zero diagonal entries. We assume for simplicity that $V=[n]$. We also define a threshold vector $\btau=\begin{pmatrix}\tau_1 &\dots &  \tau_n\end{pmatrix}^T\in\R_>^n$, where each node has an associated threshold $\tau_i\in\R_>$. We denote the state of the network at time $t\in\N$ as $\bzeta(t)=\begin{pmatrix}\zeta_1(t) &\dots &  \zeta_n(t)\end{pmatrix}^T\in\{0,1\}^n$. Each node $i\in[n]$ has a binary state $\zeta_i(t)\in\{0,1\}$, and we say that node $i$ is active at time $t$ if $\zeta_i(t)=1$, and inactive if $\zeta_i(t)=0$. For each node, the time evolution of its state is determined by
\begin{equation}\label{eq:LTM-dyn}
    \zeta_i(t+1)=\zeta_i(t)\,\vee\, \Ind\left(((A\bzeta(t))_i> \tau_i)\right).
\end{equation}
That is, an agent will become active in the next timestep if and only if its state is already active, or the weighted sum of its active neighbors exceeds its threshold. Note that we adopt a strict inequality for the activation threshold, a slight variation from standard LTM formulations, which simplifies the correspondence with the relaxation later on. Notice also that the dynamics of the LTM are completely determined by the weighted adjacency matrix and the threshold vector, so we will refer to a tuple $(A,\btau)\in\R_\geq^{n\times n}\times\R_>^{n}$ where the diagonals of $A$ are zero, as an \textit{instance} of the LTM. 

A fundamental problem in this setting is characterizing the terminal state $\bzeta(\infty)$ as a function of the initial condition $\bzeta(0)$ \cite{kempe2003maximizing-eb1, rossi2017threshold-568}. That is, given a seed set of nodes $\mathcal{I}\subseteq V$, we set the initial states to be
$$
    \zeta_i(0)=\Ind(i\in\idx)
$$
and define the cascade (or spreading, or contagion) size as $\lim_{t\to\infty} \mathbf{1}^T\zeta(t)$. A full (or large) cascade (i.e., $\lim _{t\to\infty}\mathbf{1}^T\bzeta(t)\approx n$) is often of particular interest \cite{watts2002simple-6dc}. Notice that, since the state space is finite and the dynamics are monotone, it follows that the dynamics reach a steady state in at most $n$ steps. Thus, the cascade size is given by $\mathbf{1}^T\bzeta(n)$.

\begin{definition}\label{def:LTM-C}
    Given a LTM instance $(A, \btau)$, and a seed set $\mathcal{I}\subseteq [n]$, we define the LTM cascade set 
    \begin{equation}
        C_{LTM}(\mathcal{I}):=\{i\in[n]\mid\lim_{t\to\infty} \zeta_i(t)=1\},
    \end{equation}
    where state $\bzeta$ evolves according to \eqref{eq:LTM-dyn} with initial condition $\zeta_i(0)=\Ind(i\in\mathcal{I})$.
\end{definition}

A particular subclass of the LTM model is the fractional contagion model (FCM) \cite{watts2002simple-6dc}. In this case, given a fixed threshold $\tau\in(0,1]$, the individual thresholds for the LTM are chosen such that $\tau_i = \tau\,(A\mathbf{1})_i$. Given this, the FCM implements a decision rule where each node becomes active if and only if it was previously active, or if the fraction of social input that it observes is greater than $\tau$. Decision rules that appear to be well approximated by FCM have been observed in empirical studies of human and animal social systems (e.g., \cite{rosenthal2015revealing-d20,guilbeault2018complex} but see \cite{fahimipour2023wild}). 

For simplicity, throughout the following discussion we assume that the information graph $G$ of the LTM is strongly connected, meaning that there is a path from any node to any other. This is equivalent to the weighted adjacency matrix $A\in\R_\geq^{n\times n}$ being irreducible. This condition allows us to assume that $A$ has a simple real dominant eigenvalue $\lambda_{\max}>0$.

\begin{figure}
    \centering
    \vspace{2mm}
    \includegraphics[width=0.7\linewidth]{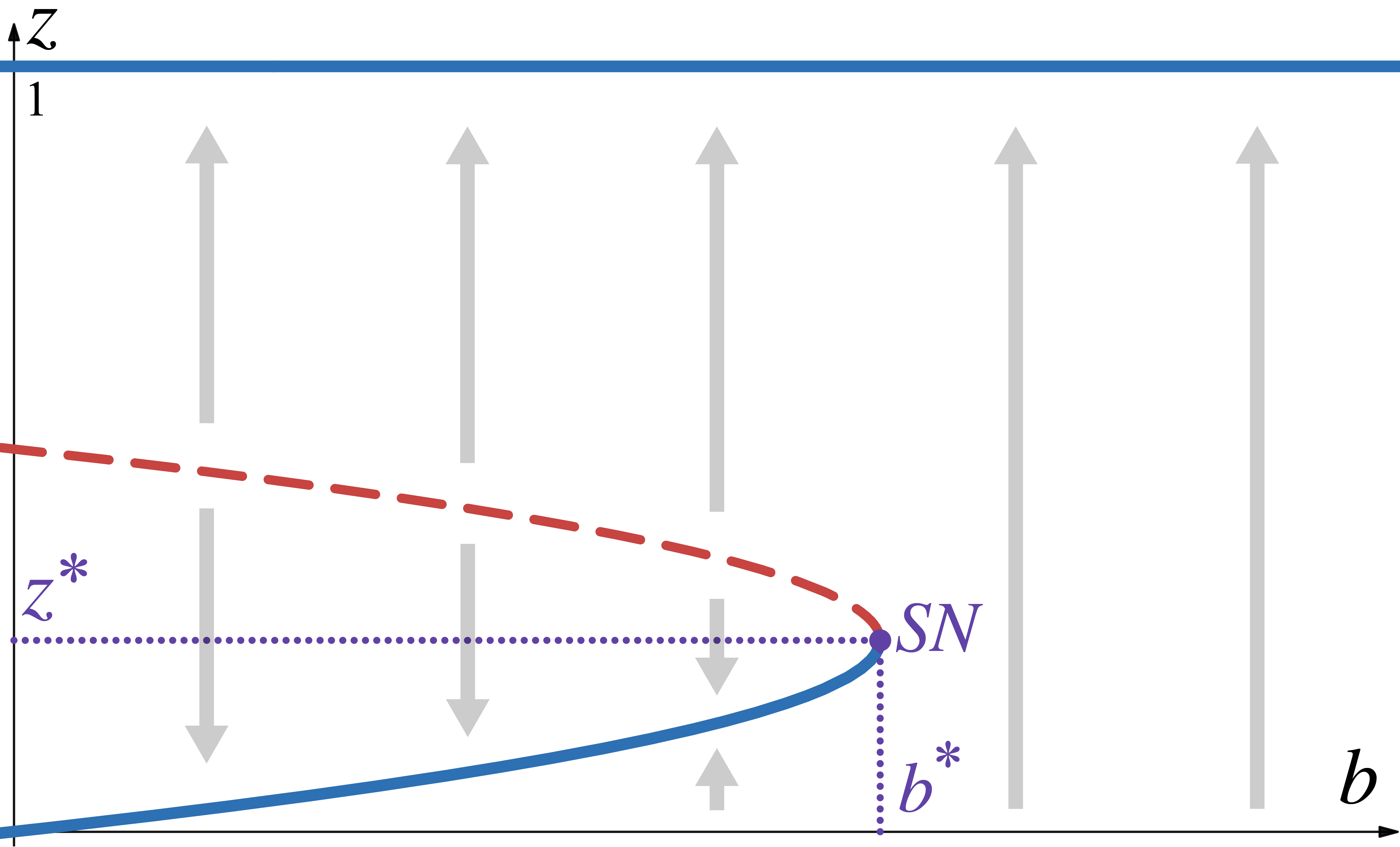}
    \caption{Bifurcation diagram of NOD dynamics \eqref{eq:NOD-dyn} for a single agent with respect to bifurcation parameter $b$. Thick solid blue and dashed red lines correspond to stable and unstable fixed points, respectively. The point marked $SN$ corresponds to the Saddle-Node bifurcation that marks the tipping point to activation. Values $b^*$ and $z^*$ are the activation threshold and the value achieved at the tipping point, respectively. Faint gray arrows represent the direction of the flow in each region. }
    \label{fig:b-bif}
\end{figure}

\section{The NOD Framework for Continuous Cascades}
\label{sec:NOD}
\subsection{Model Definition and Parameter Space}
We define an opinion dynamics model based on the nonlinear opinion dynamics (NOD) model \cite{bizyaeva2023nonlinear-758} in the following way. Consider $n$ agents, and a vector of opinion states $\bz\in[0,1]^n$. The opinions have continuous time dynamics
\begin{equation}\label{eq:NOD-dyn}
    \dot \bz=-\bz+\sat(\bmu\odot \bz+k(\bz\odot \bz) + \gamma A\bz + \bb),
\end{equation}
where $\bmu\in\R_>^n$ is a vector of heterogeneous gain on linear self-reinforcement, $k\in\R_\geq$ is a gain on nonlinear self-reinforcement, $\gamma\in\R_\geq$ is a rescaling of social interactions, $A$ is the matrix of social information weights of the LTM, and $\bb\in\R_\geq^n$ is a vector of inputs or biases. We denote a tuple $(A, \bmu, k, \gamma, \bb)\in\R_\geq^{n\times n}\times\R_>^{n}\times\R_\geq\times\R_\geq\times\R_\geq^{n}$ to be an instance of the NOD model. Notice that, in contrast with previous work, our NOD definition is not smooth everywhere, but is piecewise-smooth and continuous. However, the bifurcation we describe below always occurs away from the switching manifold, and in a region where the dynamics are locally smooth, so we still describe the NOD as smooth. Additionally, the saturation function can be approximated arbitrarily by smooth functions for which all the results in \cite{bizyaeva2023nonlinear-758, leonard2023fast-129} apply directly. 


In this paper we only consider heterogeneity in the linear gain $\mu_i$ across agents; the conditions on $\bmu$ and other parameters under which \eqref{eq:NOD-dyn} with $\bb=0$ corresponds to a given LTM instance are derived in Section~\ref{sec:RelaxLTM2NOD}. 

\subsection{Dynamical Properties of the NOD Model}\label{sec:ModelProps}
We discuss essential properties of the dynamics in \eqref{eq:NOD-dyn}, and characterize the threshold of each agent in terms of the model parameters. 
We have defined the opinions in \eqref{eq:NOD-dyn} to be in the unit hypercube $S:=[0,1]^n$. Indeed, the set $S$ is invariant for the dynamics. To see this, simply notice that for $i\in[n]$, if $z_i=1$ then $\sat(x)\leq 1$, $x\in \mathbb{R}$, implies $\dot z_i\leq 0$; and if $z_i=0$ then $\dot z_i\geq 0$, since all terms inside the saturation function are non-negative. Thus, the dynamics are well defined in $S$. 


\begin{remark}\label{rem:neutral-eq}
The neutral equilibrium $\bz=\boldsymbol{0}$ of the no-input system is asymptotically stable if $\norm{\bmu}_\infty<1-\gamma\lambda_{\max}$. 
\end{remark}
\begin{remark}\label{rem:monotone}
    For almost every initial state $\bz(0)\in S$, there exists a unique point $\bar \bz\in S$, such that $\bz(t)\to_{t\to\infty}\bar \bz$. 
\end{remark}
Remark \ref{rem:neutral-eq} follows immediately from ensuring the Jacobian of \eqref{eq:NOD-dyn} at the origin, $J=-I+\text{diag}(\bmu)+\gamma A$, is Hurwitz. This is consistent with the wider work on the indecision-breaking bifurcation in the NOD model \cite{bizyaeva2023nonlinear-758, leonard2023fast-129}. A tighter characterization of stability could be derived, but it is not required for our purposes. Remark \ref{rem:monotone} is a consequence of the system being a monotone cooperative dynamical system \cite{hirsch2006chapter-4b1}, with some minor consideration for the fact that $\sat$ is continuous but not everywhere differentiable. So limit cycles and more complex attractors need not be considered. 

We say that an agent or node is active if it converges to $1$ in forward time. The following proposition tells us that if an agent's opinion value starts close enough to a value of $1$, then it will stay active. 

\begin{proposition}\label{prop:ones-invariant}
    Let $(A,\bmu,k,\gamma,\bb)$ be an instance of NOD. If $k\geq1$, then for any $i\in[n]$, the subset $S_{z_i=1}:=\{\bz\in S\mid z_i=1\}$ is invariant. Furthermore, $S_{z_i=1}$ is asymptotically stable. 
\end{proposition}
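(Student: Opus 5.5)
Invariance is a direct computation on the $i$-th component of \eqref{eq:NOD-dyn}. Take $\bz\in S$ with $z_i=1$. The argument of the saturation in the $i$-th coordinate is
\[
x_i=\mu_i z_i+k z_i^2+\gamma(A\bz)_i+b_i=\mu_i+k+\gamma(A\bz)_i+b_i .
\]
Since $\mu_i>0$, $k\geq 1$ and the remaining terms are non-negative on $S$, we have $x_i>1$. Hence $\sat(x_i)=1$ and $\dot z_i=-1+1=0$.

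The same bound holds in a neighborhood of the face. Let $U_\epsilon:=\{\bz\in S\mid z_i>1-\epsilon\}$. On $U_\epsilon$,
\[
x_i\geq \mu_i(1-\epsilon)+k(1-\epsilon)^2 \geq (\mu_i+1)(1-\epsilon)^2 .
\]
Choose $\epsilon>0$ so that $(\mu_i+1)(1-\epsilon)^2\geq 1$, which is possible because $\mu_i>0$. For every $\bz\in U_\epsilon$ we then have $\sat(x_i)=1$, so $\dot z_i=1-z_i$ independently of the other coordinates.

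To conclude invariance rigorously, consider the reduced system on $S_{z_i=1}$. It is obtained by fixing $z_i=1$ and evolving the other coordinates by their equations, and the vector field is Lipschitz because $\sat$ is Lipschitz. Its solutions stay in $S$ by the invariance of $S$ shown above. Combined with $z_i\equiv 1$, such a solution solves the full system, since $\dot z_i=0$ there. By uniqueness of solutions of the Lipschitz full system, any trajectory starting in $S_{z_i=1}$ coincides with this one. Hence $S_{z_i=1}$ is invariant.

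For asymptotic stability, use the neighborhood $U_\epsilon$. While a trajectory stays in $U_\epsilon$, $z_i$ obeys $\dot z_i=1-z_i$. Thus $1-z_i(t)=(1-z_i(0))e^{-t}$, which is decreasing, so the trajectory never leaves $U_\epsilon$ (with $S$ invariant handling the other coordinates). This gives forward invariance of $U_\epsilon$. The distance from $\bz(t)$ to $S_{z_i=1}$ is exactly $1-z_i(t)$ (in any norm along that coordinate), so it decays exponentially. That gives both Lyapunov stability (distance never increases) and attractivity, uniformly on $U_\epsilon$.

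The only delicate point is making the invariance argument sound given that the vector field is merely piecewise smooth. I handle this through Lipschitz continuity and uniqueness rather than by invoking tangency conditions. Choosing $\epsilon$ via the lower bound $(\mu_i+1)(1-\epsilon)^2\geq1$ is the key quantitative step, and it uses $k\geq 1$ essentially.
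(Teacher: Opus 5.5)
Your proposal is correct and follows essentially the same route as the paper. Both arguments show that the saturation argument exceeds $1$ on the face $z_i=1$, so $\dot z_i=0$ there, and then choose $\epsilon$ so that the saturation stays active for $z_i>1-\epsilon$, giving $\dot z_i=1-z_i$ and exponential convergence. Your explicit bound $(\mu_i+1)(1-\epsilon)^2\geq 1$ and the uniqueness-based invariance step make rigorous what the paper states ``by continuity.'' That step is in fact redundant: setting $z_i(0)=1$ in $1-z_i(t)=(1-z_i(0))e^{-t}$ already gives invariance.
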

\begin{proof}
    First, if $k\geq 1$, then $\mu_i+k>1$ by definition of NOD instance, and so $\sat(\mu_i z_i + kz_i^2 + {(\gamma A\bz)_i} + b_i)\rvert_{z_i=1}=1$, which implies $\left.{\dot z_i}\right\rvert_{z_i=1} = 0 $. So $S_{z_i=1}$ is invariant. Furthermore, by continuity there exists $\varepsilon\in(0,1)$ such that $\mu_i(1-\varepsilon)+k(1-\varepsilon)^2>1$. Thus, for every $\bz\in S$, if $\lvert 1-z_i \rvert < \varepsilon$, then $\dot z_i=-z_i+1$, which converges exponentially to $z_i=1$. 
\end{proof}

From the proof of Proposition \ref{prop:ones-invariant} it is clear that condition $k\geq1$ is not the tightest possible to guarantee our desired behavior, but for the theorems in Section \ref{sec:RelaxLTM2NOD} we will consider that $k\geq1$ for simplicity, as it does not affect the set of behaviors achievable. 

As an immediate consequence of Proposition \ref{prop:ones-invariant}, it follows that the full cascade ($\bz=\ones$) is an equilibrium (as it is the unique intersection $\cap_{i\in[n]}S_{z_i=1}=\{\ones\}$), and it is asymptotically stable. This suggests a new way of posing the question raised in Section \ref{sec:LTM&FCM} about understanding initial conditions that lead to full cascades: in the context of NOD it becomes that of understanding which points are in the basin of attraction of the stable equilibrium $\bz=\ones$. We do not explore this question here. We give the following definition in close correspondence with Definition~\ref{def:LTM-C}. 

\begin{definition}\label{def:NOD-C}
    Given a NOD instance $(A, \bmu, k,\gamma, \bb)$, and a seed set $\mathcal{I}\subseteq [n]$, we define the NOD cascade set 
    \begin{equation}
        C_{NOD}(\mathcal{I}):=\{i\in[n]\mid\lim_{t\to\infty} z_i(t)=1\},
    \end{equation}
    where state $\bz$ evolves according to \eqref{eq:NOD-dyn} with initial condition $z_i(0)=\Ind(i\in\mathcal{I})$. (Dependence of the cascade on the particular NOD instance is left implicit.) 
\end{definition}

We state the threshold condition for activation of an agent in the NOD model. 

\begin{proposition}\label{prop:b-threshold}
    Let $(A,\bmu,k,\gamma,\bb)$ be an instance of NOD, with $\norm{\bmu}_\infty<1$, and an initial condition $\bz(0)\in S$. For $i\in[n]$, denote $b_i^*:=\frac{(1-\mu_i)^2}{4k}$. If there exists $t^*\geq0$ and $\varepsilon>0$, such that $\forall t>t^*$, we have $\gamma(A\bz)_i+b_i> b_i^*+\varepsilon$, then $\lim_{t\to\infty} z_i(t)=1$. 
\end{proposition}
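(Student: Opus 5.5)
After time $t^*$ I will treat the social term plus input as a forcing strictly above the threshold, and compare the scalar dynamics of $z_i$ with a one-dimensional autonomous system that has no equilibrium below $1$. Set $c := b_i^*+\varepsilon$. For $t>t^*$, using that the argument of $\sat$ is increasing in $\gamma(A\bz)_i+b_i$, we get
\begin{equation*}
\dot z_i \;\geq\; -z_i + \sat\bigl(\mu_i z_i + k z_i^2 + c\bigr) =: g(z_i).
\end{equation*}
This requires $k>0$. That is implicit in the statement, since $b_i^*$ is defined with $k$ in the denominator; I will also rely on $k\geq 1$ where convenient, as Section~\ref{sec:RelaxLTM2NOD} does. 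The key algebraic step is that, for $z\in[0,1)$ and whenever the saturation is inactive from above, $g(z) = kz^2 - (1-\mu_i)z + c$. This quadratic has discriminant $(1-\mu_i)^2 - 4kc = -4k\varepsilon < 0$, so it is bounded below by $k\varepsilon\cdot 4k/(4k) = \varepsilon$. More precisely, its minimum value is $c - (1-\mu_i)^2/(4k) = \varepsilon$.

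If instead the argument exceeds $1$, then $g(z) = 1-z > 0$ for $z<1$. Putting the two cases together, $g(z) \geq \min\{\varepsilon, 1-z\} > 0$ on $[0,1)$. Here the hypothesis $\norm{\bmu}_\infty<1$ enters: it guarantees $1-\mu_i>0$, so the tangency between the line $z$ and the parabola, which defines $b_i^*$, happens at $z^* = (1-\mu_i)/(2k)$. In any case the lower bound holds for all $z\in[0,1]$.

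Next, the convergence argument. Since $z_i(t)\in[0,1]$ by invariance of $S$, we have $\dot z_i \geq \min\{\varepsilon, 1-z_i\}$ for $t>t^*$. First, $z_i$ grows at rate at least $\varepsilon$ until it reaches the region where $1-z_i\leq \varepsilon$, which takes finite time at most $1/\varepsilon$. After that, $\dot z_i \geq 1 - z_i$. The comparison lemma applied to $w = 1-z_i$, with $\dot w \leq -w$, gives $1-z_i(t)\leq e^{-(t-t_1)}(1-z_i(t_1))\to 0$. The region $\{1-z_i\leq\varepsilon\}$ is forward invariant because $\dot z_i>0$ there unless $z_i=1$. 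Hence $z_i(t)\to 1$. Alternatively, once $z_i$ is close enough to $1$, one can invoke the asymptotic stability from Proposition~\ref{prop:ones-invariant}.

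The main obstacle is being careful with the nonsmooth $\sat$ and with the comparison step. The right-hand side is only Lipschitz, and the lower bound on $\dot z_i$ is a differential inequality rather than an equation. I would handle this with the standard scalar comparison lemma for absolutely continuous functions (Lipschitz vector fields give $C^1$ solutions, so this is fine), applied separately on the two regimes. One also has to check that the saturation from below never binds: its argument is at least $c>0$, so $\sat$ of it is at least $\min\{c,1\}$. This confirms the case split above is exhaustive.
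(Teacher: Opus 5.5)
Your proof is correct and follows the paper's argument: the same case split on whether the saturation binds, and the same completing-the-square bound (your discriminant computation) giving $\dot z_i \geq \min\{1-z_i,\varepsilon\}$ for $t>t^*$. Your comparison step only makes explicit the final convergence to $1$ that the paper asserts directly. One small slip: $k\varepsilon\cdot 4k/(4k)$ should read $4k\varepsilon/(4k)$, though you give the correct minimum value $\varepsilon$ immediately afterwards.
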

\begin{proof}
    Let $i\in[n]$ and $\bz\in S$ such that $\gamma(A\bz)_i+b_i>b_i^*+\varepsilon$ for some $\varepsilon>0$, and $z_i<1$. There are two cases. First, if the saturation function is active on the $i$th coordinate, then $\dot z_i=-z_i+1>0$. Otherwise, if the saturation is inactive,  
\begin{align*}
    \dot z_i&=-z_i+\mu_iz_i+kz_i^2+\gamma\cdot(A\bz)_i+b_i \\
                    &>kz_i^2+(\mu_i-1)z_i+b_i^* +\varepsilon  \\
                    &=k\left(z_i-\frac{(1-\mu_i)}{2k}\right)^2+\varepsilon\geq \varepsilon >0.
\end{align*} 
So, $\dot z_i\geq\min(-z_i+1, \varepsilon)>0$ for all $t>t^*$, hence  $\lim_{t\to\infty}z_i(t)=1$.
\end{proof}

Although in Proposition \ref{prop:b-threshold} we require the combined social and external input to remain above the threshold for all time, it would only need to be maintained long enough for the state to enter the basin of a fixed point in $S_{z_i=1}$. One way to understand where the threshold $b^*$ comes from is to regard it as a tipping point (saddle-node bifurcation) occurring with respect to parameter $b$ in \eqref{eq:NOD-dyn} when we consider a single uncoupled agent. This bifurcation is illustrated in Figure~\ref{fig:b-bif}. 

\begin{figure}
    \centering
    \vspace{2mm} 
    \includegraphics[width=1.0\linewidth]{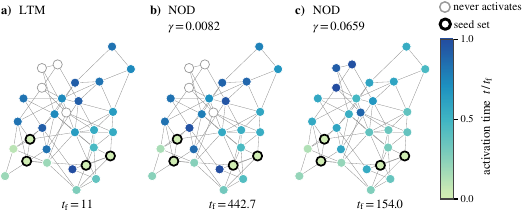}
    \caption{Simulated cascade in a random instance of LTM, and two NOD relaxations. Color bars show times of activation, where activation in NOD is taken to be when opinion crosses $0.95$ value. The seed set consists of $4$ nodes highlighted with black border.  Network and code can be found at https://github.com/ianxul/CDC2026-NOD-and-LTM.}
    \label{fig:cascade-examples}
\end{figure}

\section{An NOD-based relaxation of LTM}\label{sec:RelaxLTM2NOD}
We show how, given an instance of an LTM, it is possible to produce an instance of the NOD model defined above, such that this new model can be seen as a relaxation of the LTM model. By replacing the discontinuous step-function thresholds of the LTM with the smooth bifurcations of the NOD model, we map the discrete cascade process to the continuous flow of a dynamical system.

\subsection{Mapping LTM Thresholds to NOD Bifurcations}
Given an instance $(A,\btau)$ of the LTM as defined in Section~\ref{sec:LTM&FCM}, we show how to produce an instance of the NOD model that is a relaxation of the LTM to a continuous flow. In Proposition \ref{prop:b-threshold} we obtained an expression for the threshold ($b^*_i$) of each agent in terms of $k$ and $\mu_i$. If there is no exogenous input ($\bb=0$ in \eqref{eq:NOD-dyn}), then 
if after any time the social input is maintained strictly above an agent's threshold, 
that agent will become active. Thus, the condition for switching in the LTM model, $(A\bzeta)_i>\tau_i$, corresponds to the condition $(A\bz)_i>\frac{ b_i^*}{\gamma}=\frac{(1-\mu_i)^2}{4\gamma k}$. In order for the two models to match in behavior, we choose the $\mu_i$ in the NOD model so that $\tau_i=\frac{b_i^*}{\gamma}=\frac{(1-\mu_i)^2}{4\gamma k}$. This results in the expression $\mu_i=1-2\sqrt{\gamma k\tau_i}$, where we assume $\gamma$ and $k$ are chosen such that $\mu_i>0$ and also satisfy conditions described earlier ($k\geq 1$, $\gamma>0$). We can see that the constraints that we have found tell us that the valid values of the parameters require that $k$ is large enough and $\gamma$ is small enough. 

\begin{definition}
    \label{def:NODrelaxation}
    Given a LTM instance $(A,\btau)$, we say an NOD instance $(A,\bmu,k, \gamma, 0)$ is a \textbf{relaxation} of it if $k\geq 1$, $\gamma>0$ and $\forall i\in[n],$ $ \mu_i=1-2\sqrt{\gamma k\tau_i}>0$.
\end{definition}
This definition implies $\gamma k\norm{\btau}_\infty< 1/4$, for the NOD instance to be valid. We have the following result that relates the cascades in the LTM to those in its NOD relaxations. 

\begin{theorem}\label{thm:LTM2NOD}
    Let $(A,\btau)$ be an instance of the LTM and consider an NOD relaxation $(A,\bmu,k,\gamma,0)$. Then for any $ \mathcal{I}\subseteq[n],$ we have that $C_{LTM}(\mathcal{I})\subseteq C_{NOD}(\mathcal{I})$ .
\end{theorem}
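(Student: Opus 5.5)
Our plan is to argue by induction on the LTM time step, using the monotonicity of the NOD flow together with Proposition~\ref{prop:ones-invariant} and Proposition~\ref{prop:b-threshold}. Let $\bzeta(t)$ be the LTM trajectory and $\bz(t)$ the NOD trajectory, both started from the seed set $\mathcal{I}$. We will prove by induction on $s\in\{0,\dots,n\}$ that every node $i$ with $\zeta_i(s)=1$ satisfies $\lim_{t\to\infty} z_i(t)=1$. Since $\bzeta(n)=\bzeta(\infty)$, this yields $C_{LTM}(\mathcal{I})\subseteq C_{NOD}(\mathcal{I})$.

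\emph{Base case.} For $s=0$, the active nodes are exactly those in $\mathcal{I}$, so $z_i(0)=1$. Since $k\geq 1$, Proposition~\ref{prop:ones-invariant} shows that $S_{z_i=1}$ is invariant. Hence $z_i(t)=1$ for all $t\ge 0$, and in particular $z_i(t)\to 1$.

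\emph{Inductive step.} Suppose node $i$ becomes active at step $s+1$ but was inactive at step $s$. Let $J_s:=\{j\mid \zeta_j(s)=1\}$. The LTM rule then gives $(A\bzeta(s))_i=\sum_{j\in J_s}A_{ij}>\tau_i$. Define the slack $\delta:=\sum_{j\in J_s}A_{ij}-\tau_i>0$. By the inductive hypothesis $z_j(t)\to 1$ for every $j\in J_s$. Because the entries of $A$ are nonnegative and $\bz(t)\in S$, we have
\[
(A\bz(t))_i\ \geq\ \sum_{j\in J_s}A_{ij}z_j(t)\ \longrightarrow\ \sum_{j\in J_s}A_{ij}=\tau_i+\delta .
\]
So there is a time $t^*$ after which $(A\bz(t))_i>\tau_i+\delta/2$. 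Multiplying by $\gamma$ and using $\gamma\tau_i=b_i^*$ (this holds by the choice $\mu_i=1-2\sqrt{\gamma k\tau_i}$ in Definition~\ref{def:NODrelaxation}), we get $\gamma(A\bz(t))_i>b_i^*+\gamma\delta/2$ for all $t>t^*$. The relaxation conditions give $\mu_i\in(0,1)$, so $\norm{\bmu}_\infty<1$ and the hypotheses of Proposition~\ref{prop:b-threshold} hold with $\bb=0$ and $\varepsilon=\gamma\delta/2>0$. That proposition then gives $z_i(t)\to 1$, which closes the induction.

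\emph{Main obstacle.} The delicate point is that Proposition~\ref{prop:b-threshold} needs a \emph{strict} margin $\varepsilon$ that holds for \emph{all} sufficiently late times, whereas the inductive hypothesis only gives asymptotic convergence of the neighbors, never exact attainment of $1$ in finite time. The strict inequality in the LTM rule \eqref{eq:LTM-dyn} supplies the positive slack $\delta$. We spend half of $\delta$ to absorb the convergence error of finitely many neighbors and keep the other half as $\varepsilon$. The nonnegativity of $A$ also matters: it lets us ignore contributions from non-active nodes, since $z_j\geq 0$ by invariance of $S$. No monotone-systems argument (Remark~\ref{rem:monotone}) is needed beyond this.
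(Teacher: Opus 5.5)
Your proposal is correct and follows essentially the same route as the paper's proof. Both argue by induction over the LTM steps, use Proposition~\ref{prop:ones-invariant} for the seed set, and in the inductive step pass from the limit of the social input to a margin $\gamma\delta/2$ that is fed into Proposition~\ref{prop:b-threshold}. You also verify explicitly two points the paper leaves implicit: that $\norm{\bmu}_\infty<1$, and that $A\geq 0$ together with $\bz\geq 0$ lets you discard the inactive nodes.
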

\begin{proof}
    Denote the state $\bzeta(\ell)$ evolving according to the LTM dynamics \eqref{eq:LTM-dyn} in discrete time $\ell\in\N$.  Define, $C^{(\ell)}:=\{i\in[n]\mid \zeta_i(\ell)=1\}$. Then $C^{(0)}=\idx$ and $C^{(n)}=C_{LTM}(\idx)$. We show that for each $\ell\in\{0,...,n\}, $ $C^{(\ell)}\subseteq C_{NOD}(\idx)$, by induction on $\ell$. 
First, by definition, for all $i\in \idx$, $z_i(0)=1$, so by Proposition \ref{prop:ones-invariant} we conclude that $C^{(0)}=\idx\subseteq C_{NOD}(\idx)$. 

For the inductive step, take $\ell\in\{0,...,n-1\}$, and assume $C^{(\ell)}\subseteq C_{NOD}(\idx)$. Letting $j\in C^{(\ell+1)}$, we need to show that $j\in C_{NOD}(\idx)$. From the LTM dynamics, there are two cases: either $j\in C^{(\ell)}$ or  $\exists$$\delta\in\R_>$ such that $(A\bzeta(\ell))_j=\tau_j+\delta$. In the former case we are done by the inductive hypothesis. In the latter case, our inductive hypothesis $C^{(\ell)}\subseteq C_{NOD}(\idx)$ implies that $\forall i\in C^{(\ell)}$, $z_i\to_{t\to\infty} 1$, which in turn implies
$$
    \liminf_{t\to\infty} (A\bz(t))_j\geq (A\bzeta(\ell))_j=\tau_j+\delta=\frac{b_j^*}{\gamma}+\delta.
$$
Then $\liminf_{t\to\infty} \gamma(A\bz(t))_j\geq b_j^*+\gamma\delta$, so there exists $t^*>0$ such that for all $t\geq t^*$, $\gamma(A\bz(t))_j > (b_j^*+\gamma\delta)-\frac{\gamma\delta}{2}=b_j^*+\frac{\gamma\delta}{2}$. Using Proposition \ref{prop:b-threshold} 
$j\in C_{NOD}(\idx)$. Thus, $C^{(\ell+1)}\subseteq C_{NOD}(\idx)$.
By induction $C_{LTM}(\idx)=C^{(n)}\subseteq C_{NOD}(\idx)$.
\end{proof}

It is in this sense that we say that NOD is a relaxation of LTM, since the thresholds for each node match, meaning activation in LTM implies activation in NOD for the same seed set.  An NOD relaxation always exists, but is not unique, and its properties can be tuned by parameters $\gamma$ and $k$. An example of the preceding discussion is presented in Fig.~\ref{fig:cascade-examples}. This figure shows a cascade starting from $4$ seed nodes in the LTM (a) and in two NOD relaxations, one with small $\gamma$ (b) and one with large $\gamma$ (c). The cascade in the LTM is partial, and takes $11$ timesteps to complete. We can see that the cascade set in both NOD relaxations contains the LTM cascade set, and in fact it perfectly matches it in the case of small $\gamma$ (this is formalized in Thm. \ref{thm:NODlim2LTM}). It is interesting to note that, although the cascades in (a) and (b) match, in the LTM the nodes take on exactly $11$ colors, whereas in the NOD all $22$ nodes in the cascade take on different colors, corresponding to a continuous gradation of activation times, something that is absent in the LTM case.

The parameter $k$ modulates the strength of the nonlinear self-reinforcement. Higher values of $k$ produce steeper, more switch-like activation dynamics. 

From Remark \ref{rem:neutral-eq} it follows that a further condition, sufficient if we wish to impose stability of the neutral equilibrium $\bz=\boldsymbol{0}$  such that no perturbation of it leads to a cascade, is to require $\gamma<(1-\norm{\bmu}_\infty)/\lambda_{\max}$.

\begin{figure}
    \centering
    \vspace{2mm}
    \includegraphics[width=1.0\linewidth]{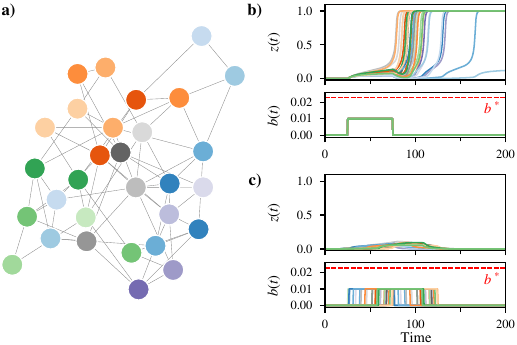}
    \caption{Cascades in the NOD model for distributed subthreshold inputs.  $a)$ Network structure, nodes are identified by color. This is the same network used in \ref{fig:cascade-examples}, but in this case the thresholds for each node are chosen to be equal $\tau_i=0.5$, leading to homogeneous $b_i^*$ in the NOD relaxation. $b)$ Top subplot shows opinion states ($\bz(t)$) of agents across time. Bottom subplot shows the applied time-varying input $b$ for each agent. The threshold for all agents is $b_i^*\approx 0.02$, shown as a dashed red line. In this case, a subthreshold input of $b_i=0.01$ is applied simultaneously to each agent as a step function for $50$ time units. $c)$ Similar to subplot (b). In this case, the input step function is applied with a delay selected uniformly randomly from the interval $[0,50].$} 
    \label{fig:dist-input}
\end{figure}

\subsection{Exact Equivalence and Qualitative Differences}
The fact that NOD dynamics evolve in continuous state and time means that the particular evolution of the cascade, while related as shown before to that of the LTM, evolves in a quite different way. For example, while any input above threshold will cause activation of the node, the speed with which this happens also depends on the magnitude of the input, with stronger inputs leading to faster responses. This means a seed set that is just slightly above the threshold for propagation can take considerably longer to cascade than a seed set producing an initial input well above the threshold of its neighbors, even when this difference is not noticeable in the LTM. In addition to this temporal difference, another consequence of the opinion state being continuous is that there is subthreshold signaling. That is, even when the input an agent receives is below its threshold, it will make its resting state positive, which in turn feeds back into the network. This fact implies that the relation in the previous section does not hold in the opposite direction, that is, we do \textit{not} have in general that $C_{NOD}(\mathcal{I})\subseteq C_{LTM}(\mathcal{I})$. This is illustrated in Fig. \ref{fig:cascade-examples}, where we can see that the cascade set of the NOD relaxation with high $\gamma$ leads to a full cascade, which is a strict superset of the LTM cascade set in A. However, as we show in the following, we can guarantee exact correspondence when $\gamma$ is made small enough. 

Intuitively, this follows from the fact that as the social coupling $\gamma$ decreases, the opinion value at the activation threshold (the value $z_i^*$ in Fig. \ref{fig:b-bif}) correspondingly decreases, and so the influence of subthreshold opinions on the network diminishes, driving the continuous cascade in NOD to match the discrete cascade in LTM. The following theorem shows that by sufficiently bounding the social coupling parameter, the continuous NOD relaxation recovers the exact discrete cascade of the LTM, independent of the initial seed set. First, we prove a lemma related to Proposition \ref{prop:b-threshold} that will be useful in proving Theorem \ref{thm:NODlim2LTM}, which tells us that if an agent starting at neutral opinion eventually becomes active, then it must have at some point received a combined social and external input greater than its threshold. 
\begin{lemma}\label{lem:must-cross-thresh}
    Let $(A,\bmu,k,\gamma,\bb)$ be an instance of NOD with initial condition $\bz(0)\in S$. Suppose $\norm{\bmu}_\infty<1$, and $k\in\R_>$ such that  $z_i^*:=\frac{1-\mu_i}{2k}<1$. Let $T\in\R_\geq$. If $z_i(0)=0$ and $\forall t\in[0,T]$, $\gamma(A\bz(t))_i+b_i < b_i^*$, then $\forall t\in[0,T]$, $z_i(t) < z_i^*$.
\end{lemma}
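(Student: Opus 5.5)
We argue by contradiction, using a first-crossing time and a sign estimate on the $i$th vector field. Let $u(t):=\gamma(A\bz(t))_i+b_i$. By hypothesis $u(t)<b_i^*$ on $[0,T]$. Since $u$ is continuous and $[0,T]$ is compact, there is $\varepsilon>0$ with $u(t)\le b_i^*-\varepsilon$ for all $t\in[0,T]$. Suppose the claim fails. Since $z_i(0)=0<z_i^*$ and $z_i$ is continuous, the set $\{t\in[0,T]: z_i(t)\ge z_i^*\}$ is nonempty and closed. Let $t_1>0$ be its minimum. Then $z_i(t_1)=z_i^*$ and $z_i(t)<z_i^*$ on $[0,t_1)$.

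The key step is to show $\dot z_i<0$ whenever $z_i=z_i^*$ and the input is at most $b_i^*-\varepsilon$. First we check that the saturation is not active at the upper end. The argument of $\sat$ is $x:=\mu_i z_i^*+k(z_i^*)^2+u$. Since $z_i^*=\frac{1-\mu_i}{2k}$, we have $k(z_i^*)^2=\frac{(1-\mu_i)^2}{4k}=b_i^*$. Hence
\[
x-z_i^*=k(z_i^*)^2-(1-\mu_i)z_i^*+u=b_i^*-2b_i^*+u=u-b_i^*\le-\varepsilon .
\]
So $x<z_i^*<1$, and $\sat(x)\le\max(0,x)$. If $x\ge0$, then $\dot z_i=-z_i^*+x\le-\varepsilon$. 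If $x<0$, then $\dot z_i=-z_i^*<0$, because $\mu_i<1$ forces $z_i^*>0$. In both cases $\dot z_i(t_1)\le-\min(\varepsilon,z_i^*)<0$.

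This contradicts the choice of $t_1$. Since $z_i(t)<z_i^*=z_i(t_1)$ for $t<t_1$, the left derivative at $t_1$ must be $\ge 0$. The vector field is continuous, so $z_i$ is $C^1$ and $\dot z_i(t_1)$ equals this left derivative, which is impossible. Note that $T$ is finite, so compactness gives the uniform $\varepsilon$. Strictly, we only need the pointwise strict inequality $u(t_1)<b_i^*$, which already gives $x<z_i^*$ and thus $\dot z_i(t_1)<0$.

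I expect the only delicate point to be the saturation. We must confirm that clipping at $0$ or $1$ cannot make $\dot z_i(t_1)$ nonnegative. The computation $x-z_i^*=u-b_i^*$ handles this, together with the hypothesis $z_i^*<1$.
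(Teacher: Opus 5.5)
Your proof is correct and follows essentially the same route as the paper. Both arguments show that at $z_i=z_i^*$ the argument of $\sat$ lies strictly below $z_i^*<1$, so $\dot z_i<0$ there, and then reach a contradiction at the first time $z_i$ attains $z_i^*$. Your identity $x-z_i^*=u-b_i^*$ is a tidier form of the paper's computation. Treating the lower clipping case explicitly (the paper skips it because the argument of $\sat$ is nonnegative on $S$) and observing that the uniform $\varepsilon$ is unnecessary are harmless refinements.
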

\begin{proof}
Notice that at $z_i=z_i^*$, 
\begin{align*}
    \phi_i&:=\mu_iz_i^*+kz_i^{*2}+\gamma(A\bz)_i+b_i \\
                                    &< \left(\frac{1-\mu_i}{2k}\right)\left(\mu_i+\frac{1-\mu_i}{2}\right) + \frac{(1-\mu_i)^2}{4k} \\
                                    &= \frac{(1-\mu_i)(1+\mu_i)}{4k} + \frac{(1-\mu_i)^2}{4k} \\
                                    &= \frac{1-\mu_i}{2k} = z_i^* < 1.
\end{align*}
So the saturation is inactive. From this we have that $\left.\dot z_i\right|_{z_i=z_i^*} = -z_i^*+\sat(\phi_i) = -z_i^*+\phi_i < -z_i^*+z_i^* = 0$. Now, suppose $\exists t\in[0,T]$, $z_i(t)\geq z_i^*$. Take $t^*:=\inf\{t\in[0,T]\mid z_i(t)\geq z_i^*\}>0$. Then, by continuity, $z_i(t^*)=z_i^*$ and $\forall t\in[0,t^*), z_i(t)<z_i^*$. But this would imply $\dot z_i(t^*)\geq 0$, which contradicts $\left.\dot z_i\right|_{z_i=z_i^*}<0$. 
\end{proof}

\begin{theorem}\label{thm:NODlim2LTM}
    Let $(A,\btau)$ be an instance of the LTM such that $\forall\bzeta\in\{0,1\}^n,\forall i\in[n], (A\bzeta)_i\neq\tau_i$. Then there exists a $\gamma^*\in\R_>$ such that for any NOD relaxation $(A,\bmu,k,\gamma,0)$ of the LTM instance with $\gamma\in(0,\gamma^*)$, it holds that $\forall\idx\subseteq[n], $ $C_{NOD}(\idx)=C_{LTM}(\idx)$.

\end{theorem}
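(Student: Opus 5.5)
By Theorem~\ref{thm:LTM2NOD} we already have $C_{LTM}(\idx)\subseteq C_{NOD}(\idx)$ for every relaxation, so only the reverse inclusion $C_{NOD}(\idx)\subseteq C_{LTM}(\idx)$ needs proof, for $\gamma$ small. Write $C:=C_{LTM}(\idx)$ and let $\bzeta^\infty$ be its indicator vector. Because $C$ is a fixed point of the LTM, every $j\notin C$ satisfies $(A\bzeta^\infty)_j\le\tau_j$. The non-degeneracy assumption then gives $(A\bzeta^\infty)_j<\tau_j$. Since $\{0,1\}^n$ is finite, there is a uniform margin
$$
\delta:=\min_{\bzeta\in\{0,1\}^n}\ \min_{i:\,(A\bzeta)_i<\tau_i}\bigl(\tau_i-(A\bzeta)_i\bigr)>0,
$$
and it depends only on $(A,\btau)$. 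The plan is to show that the nodes outside $C$ never get above their tipping value $z_j^*$. We take the bound on the other nodes' opinions to be $z_i\le 1$ for $i\in C$ and $z_i< z_i^*$ for $i\notin C$, where $z_i^*=\frac{1-\mu_i}{2k}=\sqrt{\gamma\tau_i/k}$ (using $\mu_i=1-2\sqrt{\gamma k\tau_i}$). Note that $z_i^*\le\sqrt{\gamma\norm{\btau}_\infty}$ since $k\ge1$, so this bound tends to $0$ as $\gamma\to0$, uniformly in $k\ge1$.

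Choose $\gamma^*$ so that $\norm{A}_\infty\sqrt{\gamma^*\norm{\btau}_\infty}<\delta$, and also $\gamma^*\norm{\btau}_\infty<1/4$ so that relaxations exist and $z_i^*<1$. Fix $\idx$, a relaxation with $\gamma<\gamma^*$, and the NOD trajectory from $z_i(0)=\Ind(i\in\idx)$. Define
$$
T:=\sup\{t\ge0\mid z_j(s)<z_j^*\ \forall j\notin C,\ \forall s\in[0,t]\}.
$$
We have $T>0$: at $t=0$ we have $z_j(0)=0<z_j^*$ for $j\notin C$ (because $\idx\subseteq C$), and the trajectory is continuous. On $[0,T)$, for any $j\notin C$,
$$
(A\bz)_j\le (A\bzeta^\infty)_j+\sum_{i\notin C}A_{ji}z_i^*\le \tau_j-\delta+\norm{A}_\infty\sqrt{\gamma\norm{\btau}_\infty}<\tau_j .
$$
Multiplying by $\gamma$ gives $\gamma(A\bz)_j<\gamma\tau_j=b_j^*$, with a uniform gap of $\gamma\bigl(\delta-\norm{A}_\infty\sqrt{\gamma\norm{\btau}_\infty}\bigr)$.

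If $T<\infty$, the uniform gap and continuity mean the strict inequality $\gamma(A\bz(t))_j<b_j^*$ still holds on $[0,T]$. Lemma~\ref{lem:must-cross-thresh} (with $\bb=0$ and $z_j(0)=0$) then gives $z_j(t)<z_j^*$ on $[0,T]$ for every $j\notin C$. By continuity the same holds slightly past $T$, contradicting the definition of $T$. Hence $T=\infty$, and $z_j(t)<z_j^*<1$ for all $t$ whenever $j\notin C$, so $j\notin C_{NOD}(\idx)$. This proves $C_{NOD}(\idx)\subseteq C$.

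The main obstacle I expect is this bootstrap/continuation step. The opinions of nodes in $C$ may be anywhere in $[0,1]$, and the argument works only because bounding them crudely by $1$ reproduces exactly the LTM fixed-point input $(A\bzeta^\infty)_j$. The nodes outside $C$ have to be controlled jointly rather than one at a time, which is why the argument is arranged around the first exit time $T$. Lemma~\ref{lem:must-cross-thresh} needs the input inequality on a closed interval, and the uniform gap is what supplies it at the endpoint $T$. The other key point is that $\gamma^*$ does not depend on $k$ or $\idx$. This follows because $z_i^*=\sqrt{\gamma\tau_i/k}$ is bounded using $k\ge1$, and because $\delta$ is a minimum over all of $\{0,1\}^n$.
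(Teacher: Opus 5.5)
Your proposal is correct and follows essentially the same argument as the paper. Both use the same uniform margin $\delta_{gap}$ over all of $\{0,1\}^n$, the same $k$- and seed-independent choice $\gamma^*\approx\delta_{gap}^2/(\norm{\btau}_\infty\norm{A}_\infty^2)$, bound nodes of $C_{LTM}(\idx)$ by $1$ and the remaining nodes by $\sqrt{\gamma\norm{\btau}_\infty/k}$ up to the first crossing of $z^*$, and then invoke Lemma~\ref{lem:must-cross-thresh}. The only difference is presentation: you run it as a supremum/continuation argument, using the uniform gap for strictness at the endpoint, whereas the paper argues by contradiction from the first agent outside $C_{LTM}(\idx)$ to reach its $z_i^*$.
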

\begin{proof}
    To establish a bound independent of the seed set, define $\delta_{gap}:=\min\{\tau_i-(A\bzeta)_i\mid \bzeta\in\{0,1\}^n, i\in [n], (A\bzeta)_i<\tau_i\}.$
    The minimum exists because it is over a finite set, and we have $\delta_{gap}>0$. Let $\idx\subseteq[n]$ be some seed set, and denote $\bar\bzeta(n):=\ones-\bzeta(n)$.
    Now, take 
    $$ \gamma^*:=\frac{\delta_{gap}^2}{\norm{\btau}_\infty\norm{A}_\infty^2}\leq\frac{k\delta_{gap}^2}{\norm{\btau}_\infty\norm{A}_\infty^2},
    $$
    and let $0<\gamma<\gamma^*$. Suppose $\Delta:=C_{NOD}(\idx)\setminus C_{LTM}(\idx)\neq\emptyset$. Let $i\in[n]\setminus C_{LTM}(\idx)$ be the index of the agent whose state $z_i$ first crosses its value $z_i^*=\frac{1-\mu_i}{2k}=\sqrt{\frac{\gamma\tau_i}{k}}$ at time $t^*$, and if there are multiple simultaneous crossings, take the minimum index. Note $i$ has to exist because $\Delta\neq\emptyset$. Then, we have for all $t\leq t^*$, 
    \begin{align*}\gamma(A\bz)_i+b_i&\leq\gamma(A\bzeta(n)+\norm{\bz^*}_\infty A\bar\bzeta(n))_i+0\\
                    &\leq \gamma(\tau_i-\delta_{gap}+\norm{\bz^*}_\infty \norm{A}_\infty)\\
                    &= b_i^*+\gamma\left(-\delta_{gap}+\norm{\bz^*}_\infty \norm{A}_\infty\right)\\
                    &=b_i^*+\gamma\left(-\delta_{gap}+ \sqrt{\frac{\gamma\norm{\btau}_\infty}{k}}\norm{A}_\infty\right)\\
                    &<b_i^*+\gamma\left(-\delta_{gap}+ \sqrt{\frac{k\delta_{gap}^2\norm{\btau}_\infty}{k\norm{\btau}_\infty\norm{A}_\infty^2}}\norm{A}_\infty\right)\\
                    &= b_i^*.
    \end{align*}
    So $\forall t\in[0,t^*], $ $\gamma(A\bz)_i<b_i^*$, which contradicts Lemma \ref{lem:must-cross-thresh}. Thus, $\Delta=\emptyset$, and so, together with Theorem \ref{thm:LTM2NOD}, we conclude $C_{NOD}(\idx)= C_{LTM}(\idx)$.
\end{proof}

The $\gamma^*$ defined above is certainly very conservative, and could be made tighter, but the key takeaway is that the choice of NOD relaxation can be tuned through the choice of parameters $\gamma$ and $k$ to match exactly the cascade sets of the corresponding LTM in almost all cases. The condition that all $(A\bzeta)_i\neq\tau_i$ is almost always satisfied, and when not, there is another equivalent instance of LTM a small perturbation away that does satisfy it. This is a non-degeneracy condition stating that it is never the case that some activation pattern produces a social input to a node that is exactly at its threshold. It is also worth noting that small $\gamma$ does not mean turning off the network. Both the threshold $\bb^*$ and the social input $\gamma(A\bz)$ scale linearly with $\gamma$, so the social activation condition is unchanged. What does change is the opinion value $\bz^*$ at the tipping point, which also vanishes as $\gamma\to0$. This means that when $\gamma$ becomes small, partially-activated agents contribute negligibly relative to fully activated ones. 



\section{The NOD relaxation expands on the LTM}
\label{sec:expand}
The behavior captured by the continuous dynamics of the NOD model is much richer than that of the LTM. In defining the NOD relaxation we have taken the bias or external information term $\bb$ to be zero. In the LTM, this term is not necessary because it would produce the same effect as reducing the thresholds. However, in the NOD model, a constant non-zero $\bb$ produces a persistent subthreshold signal. As a result it is possible to have input vectors $\bb$ that are subthreshold for all individual agents, but which produce a cascade at the group level. Cascade triggering by distributed inputs has been analyzed in NOD \cite{bizyaeva2021control-ee6, franci2021analysis-d82}; but our analysis extends this work by obtaining thresholds in closed form and inheriting their values from an underlying LTM. An example of this is presented in Fig. \ref{fig:dist-input}, where for an NOD relaxation of an LTM (subplot $a$) with constant threshold (leading to a constant threshold in the NOD) we apply an equal time-varying input $\bb(t)$ as a square wave with an amplitude less than half that of the threshold for a short interval, and this produces an almost full cascade (subplot $b$). This is possible due to subthreshold effects feeding back through the network. In fact, if $A$ is irreducible, then for any given non-zero input direction, the critical input in that direction at which a tipping point to a (partial) cascade occurs will be component-wise strictly less than the vector of isolated thresholds, since if any $b_i=b_i^*$, then the strictly positive social feedback would push the agent past its own tipping point (Prop. \ref{prop:b-threshold}).

We can also use the NOD model 
to consider time-varying inputs $\bb$, and how the temporal structure of the information received by the agents affects cascade outcomes. An example of this is shown in Fig. \ref{fig:dist-input}$c$, where the same square time-varying input is applied to all agents, but with a delay selected uniformly randomly to be no greater than the duration of the square wave. We can see that this change produces a subthreshold response in the agents, but it is not enough to drive the group beyond the collective tipping point to a cascade, even though there is a non-zero interval where all the inputs overlap. This behavior is not specific to the realization shown: across $100$ independent draws of the delays, with all other parameters held fixed, no agent activated in any run. This emphasizes the large role that the temporal structure of inputs can play in determining cascade outcomes, which are not captured in the LTM. 

Both 
aspects will be explored formally in future work. 

\section{Conclusion}
\label{sec:conclusions}
We have established a rigorous mathematical bridge between discrete contagion models and continuous networked dynamics by introducing a continuous-time and continuous-state relaxation of the Linear Threshold Model (LTM). By mapping the discontinuous step-function thresholds of the LTM directly to the smooth saddle-node bifurcations of the Nonlinear Opinion Dynamics (NOD) model, we retain previous results on cascade properties in the classical threshold model while embedding it within a rich, differentiable state-space. We formally proved that activation in the discrete LTM guarantees activation in the NOD relaxation for any given seed set, and established explicit parameter bounds on the social coupling under which the continuous flow exactly recovers the discrete cascade.

Beyond the exact equivalence regime, the true utility of this NOD relaxation lies in its ability to capture network phenomena that are not captured by discrete models. By allowing subthreshold signaling, the continuous framework shows how distributed, weak inputs can feed back through the network to trigger global cascades. Furthermore, it reveals the effects on collective decision-making of the temporal structure and delay of external signals. Future work will leverage this continuous formulation to systematically study the effects of time-varying inputs, structured network heterogeneity, and mixed feedback motifs on the emergence of collective behavior in both biological and engineered systems.

\bibliographystyle{IEEEtran}
\bibliography{refs}

\end{document}